%% LyX 2.3.3 created this file.  For more info, see http://www.lyx.org/.
%% Do not edit unless you really know what you are doing.
\documentclass{article}
\usepackage[T1]{fontenc}
\usepackage{amsmath}
\usepackage{amsthm}
\usepackage{amssymb}
\usepackage{setspace}
\usepackage[authoryear]{natbib}
\setstretch{1.15}

\makeatletter
%%%%%%%%%%%%%%%%%%%%%%%%%%%%%% Textclass specific LaTeX commands.
\newcommand{\lyxaddress}[1]{
	\par {\raggedright #1
	\vspace{1.4em}
	\noindent\par}
}
\newenvironment{lyxlist}[1]
	{\begin{list}{}
		{\settowidth{\labelwidth}{#1}
		 \setlength{\leftmargin}{\labelwidth}
		 \addtolength{\leftmargin}{\labelsep}
		 }}
	{\end{list}}
\theoremstyle{plain}
\newtheorem{thm}{\protect\theoremname}
\theoremstyle{plain}
\newtheorem{prop}[thm]{\protect\propositionname}
\ifx\proof\undefined
\newenvironment{proof}[1][\protect\proofname]{\par
	\normalfont\topsep6\p@\@plus6\p@\relax
	\trivlist
	\itemindent\parindent
	\item[\hskip\labelsep\scshape #1]\ignorespaces
}{%
	\endtrivlist\@endpefalse
}
\providecommand{\proofname}{Proof}
\fi
\theoremstyle{plain}
\newtheorem{cor}[thm]{\protect\corollaryname}

%%%%%%%%%%%%%%%%%%%%%%%%%%%%%% User specified LaTeX commands.

%\renewcommand{\arraystretch}{1.5}
\date{}

\makeatother

\providecommand{\corollaryname}{Corollary}
\providecommand{\propositionname}{Proposition}
\providecommand{\theoremname}{Theorem}

\begin{document}

\title{On Axiomatization of Lewis' Conditional Logics}
\author{Xuefeng Wen}
\maketitle

\lyxaddress{\begin{center}
Institute of Logic and Cognition\\Department of Philosophy\\Sun Yat-sen University\\
wxflogic@gmail.com
\par\end{center}}
\begin{abstract}
This paper first shows that the popular axiomatic systems proposed
by Nute for Lewis' conditional logics are not equivalent to Lewis'
original systems. In particular, the axiom CA which is derivable in
Lewis' systems is not derivable in Nute's systems. Then the paper
proposes a new set of axiomatizations for Lewis' conditional logics,
without using CSO, or RCEA, or the rule of interchange of logical
equivalents. Instead, the new axiomatizations adopt two axioms which
correspond to cautious monotonicity and cautious cut in nonmonotonic
logics, respectively. Finally, the paper gives a simple resolution
to a puzzle about the controversial axiom of simplification of disjunctive
antecedents, using a long neglected axiom in one of Lewis' systems
for conditional logics.

\textbf{Keywords}: conditional logic, axiomatization, simplification
of disjunctive antecedents, nonmonotonic logic
\end{abstract}

\section{Introduction}

Lewis proposed two conditional logics, denoted by $\mathbf{V}$ and
$\mathbf{VC}$, respectively. Each of them has three different axiomatizations
in the literature. Two were proposed by Lewis himself, one in \citep{Lewis1971},
where $\mathbf{V}$ and $\mathbf{VC}$ were named $\mathbf{C0}$ and
$\mathbf{C1}$, respectively, the other in \citep{Lewis1973a}. A
third formulation was offered by \citet{Nute1980,Nute1984,Nute2001}.
Lewis' formulations have less but some cumbersome axioms. Nute's formulations
have more but neater axioms, making them easier to compare with other
systems. Thus, Nute's axiomatizations are more popular in the literature
now. When referring to Lewis' conditional logics, often are Nute's
axiomatizations presented, for instance in \citep{ArloCosta2014}
and \citep{Pozzato2010}.\footnote{In \citep{ArloCosta2014}, the author wrote: ``...it is useful to
see first that the system VC can be axiomatized via the axioms ID,
MP, MOD, CSO, CV and CS with RCEC and RCK as rules of inference.''} I will show in this paper, however, that Nute's systems are not equivalent
to Lewis' original ones. In particular, the axiom CA derivable from
Lewis' systems is not derivable from Nute's systems. By replacing
MOD with CA in Nute's systems, the defects can be amended. 

Both Lewis' systems in \citep{Lewis1971} and Nute's systems contain
the axiom CSO, which says that bi-conditionally implied propositions
can be interchanged with each other for antecedents. From CSO together
with RCE (namely a conditional from $\varphi$ to $\psi$ can be derived
if $\varphi$ entails $\psi$), the rule of interchange of logical
equivalents for antecedents (RCEA, henceforth) can be derived. Instead
of CSO, Lewis' systems in \citep{Lewis1973a} contain the rule of
interchange of logical equivalents (RE, henceforth). I will propose
some new axiomatizations for Lewis' logics. They contain neither CSO,
nor RCEA or RE, and hence may shed light on nonclassical conditional
logics, where these axiom and rules are invalidated. The new systems
I propose indicate that it is hard to abandon these axiom and rules
in conditional logics, since they can be recovered from other intuitive
axioms.

Finally, I will show that an axiom in one of Lewis' systems can be
used to solve a puzzle triggered by the controversial axiom of simplification
of disjunctive antecedents (SDA, henceforth), which is intuitively
valid but trivializes conditional implication to strict implication
if added to any conditional logic with RCEA.

\section{Preliminaries}

For reference, I list all related axioms and rules for conditional
logics in this paper as follows:
\begin{lyxlist}{00.00.0000}
\item [{(PC)}] All tautologies and derivable rules in classical logic
\item [{(ID)}] $\varphi>\varphi$
\item [{(CM)}] $(\varphi>\psi\land\chi)\to(\varphi>\psi)\land(\varphi>\chi)$
\item [{(CC)}] $(\varphi>\psi)\land(\varphi>\chi)\to(\varphi>\psi\land\chi)$
\item [{(CV)}] $(\varphi>\chi)\land\neg(\varphi>\neg\psi)\to(\varphi\land\psi>\chi)$
\item [{(CA)}] $(\varphi>\chi)\land(\psi>\chi)\to(\varphi\lor\psi>\chi)$
\item [{(AC)}] $(\varphi>\psi)\land(\varphi>\chi)\to(\varphi\land\psi>\chi)$
\item [{(RT)}] $(\varphi>\psi)\land(\psi\land\varphi>\chi)\to(\varphi>\chi)$
\item [{(CSO)}] $(\varphi>\psi)\land(\psi>\varphi)\to((\varphi>\chi)\leftrightarrow(\psi>\chi))$
\item [{(MOD)}] $(\varphi>\neg\varphi)\to(\psi>\neg\varphi)$
\item [{(DAE)}] $(\varphi\lor\psi>\varphi)\lor(\varphi\lor\psi>\psi)\lor((\varphi\lor\psi>\chi)\leftrightarrow(\varphi>\chi)\land(\psi>\chi))$
\item [{(PIE)}] $(\varphi>\neg\psi)\lor((\varphi\land\psi>\chi)\leftrightarrow(\varphi>(\psi\to\chi)))$
\item [{(CMP)}] $(\varphi>\psi)\to(\varphi\to\psi)$
\item [{(CS)}] $\varphi\land\psi\to(\varphi>\psi)$
\item [{(SDA)}] $(\varphi\lor\psi>\chi)\to(\varphi>\chi)\land(\psi>\chi)$
\item [{(RCM)}] $\dfrac{\varphi\to\psi}{(\chi>\varphi)\to(\chi>\psi)}$
\item [{(RCE)}] $\dfrac{\varphi\to\psi}{\varphi>\psi}$
\item [{(RCN)}] $\dfrac{\psi}{\varphi>\psi}$
\item [{(RCK)}] $\dfrac{\psi_{1}\land\ldots\land\psi_{n}\to\psi}{(\varphi>\psi_{1})\land\ldots\land(\varphi>\psi_{n})\to(\varphi>\psi)}$
$(n\geq0)$
\item [{(RCEA)}] $\dfrac{\varphi\leftrightarrow\psi}{(\varphi>\chi)\leftrightarrow(\psi>\chi)}$
\item [{(RCEC)}] $\dfrac{\varphi\leftrightarrow\psi}{(\chi>\varphi)\leftrightarrow(\chi>\psi)}$
\item [{(RE)}] $\dfrac{\psi\leftrightarrow\psi'}{\varphi\leftrightarrow\varphi[\psi/\psi']}$
\end{lyxlist}
All the axioms and rules above had been discussed in the literature
(e.g. \citealp{Lewis1973a,Nute1980,Nute1984}) before. Note that I
slightly reformulate the axiom MOD here. The standard formulation
of MOD in the literature (including Lewis' works) is
\begin{lyxlist}{00.00.0000}
\item [{MOD'}] $(\neg\varphi>\varphi)\to(\psi>\varphi)$.
\end{lyxlist}
There are two reasons why I reformulate it. One is that it is this
reformulation rather than the standard one that corresponds directly
to the associated model condition of worlds selection functions, normally
formulated in the literature as follows:
\begin{lyxlist}{00.00.0000}
\item [{(mod)}] $f(w,\varphi)=\emptyset\Longrightarrow f(w,\psi)\cap[\varphi]=\emptyset$,
\end{lyxlist}
where $[\psi]$ denotes the truth set of $\psi$, and $f$ is the
selection function, associating with a possible world $w$ and a sentence
$\varphi$ a set of $\varphi$- worlds that are closest to $w$. Rather,
the standard formulation MOD' corresponds to the following condition
instead:
\begin{lyxlist}{00.00.0000}
\item [{(mod')}] $f(w,\neg\varphi)=\emptyset\Longrightarrow f(w,\psi)\cap[\neg\varphi]=\emptyset$.
\end{lyxlist}
Of course, if the rule RCEA or RE is available, the difference between
the two formulations is immaterial. But if one works on conditional
logics without such rules admissible, the two formulations might turn
out to be very different. This is related to my second reason for
choosing the reformulation. In a proof of the derivation of CSO from
MOD' and PIE below, I find if the reformulation MOD is used then the
rule RCEA or RE is dispensable; otherwise, such rules are required
for the derivation.

\section{Amendments of Nute's Axiomatizations}

Nute's axiomatization for $\mathbf{V}$ and $\mathbf{VC}$ are as
follows\footnote{Nute's original axiomatization used the rule RCK instead of the axioms
CM and CC. But to reduce inference rules to the minimum, I prefer
to use these two axioms instead of the rule RCK. It can be easily
shown that they are equivalent as long as RCEC is provided.}
\[
\begin{aligned}\mathbf{Vn} & =\langle\mathrm{PC,ID,CM,CC,CV,MOD',CSO;RCEC}\rangle\\
\mathbf{VCn} & =\langle\mathrm{PC,ID,CM,CC,CV,MOD',CSO,CMP,CS;RCEC}\rangle.
\end{aligned}
\]
I will show that CA is not derivable in neither of these systems.
Since $\mathbf{VCn}$ is the stronger one, it suffices to prove that
CA is not derivable in $\mathbf{VCn}$. 
\begin{prop}
$\nvdash_{\mathbf{VCn}}\mathrm{CA}$.
\end{prop}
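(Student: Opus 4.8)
The plan is to refute CA in $\mathbf{VCn}$ by a semantic argument, using selection-function models: a model is $\langle W,f,V\rangle$ with the Boolean connectives interpreted classically and $w\Vdash\varphi>\psi$ iff $f(w,\varphi)\subseteq[\psi]$, where $[\chi]=\{u\in W:u\Vdash\chi\}$. I will work with models in which $f(w,\varphi)$ depends on the antecedent only through $[\varphi]$, writing $f(w,A)$ for the common value of $f(w,\varphi)$ when $[\varphi]=A$, so that $f(w,\cdot)\colon\mathcal P(W)\to\mathcal P(W)$. The first step is to record the conditions on $f$ that validate the constituents of $\mathbf{VCn}$: $f(w,A)\subseteq A$ validates ID; CM and CC hold in every selection model; strong centering, $w\in A\Rightarrow f(w,A)=\{w\}$, validates CMP and CS together; the condition ``$f(w,A)\cap B\neq\emptyset$ implies $f(w,A\cap B)\subseteq f(w,A)$'' validates CV; the condition ``$f(w,A)\subseteq B$ and $f(w,B)\subseteq A$ imply $f(w,A)=f(w,B)$'' validates CSO; the condition ``$f(w,W\setminus A)=\emptyset$ implies $f(w,C)\subseteq A$ for every $C$'' validates MOD$'$; and RCEC preserves validity automatically, since the truth of $\chi>\psi$ at $w$ depends on $\psi$ only through $[\psi]$. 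Finally, CA corresponds to the disjunction condition $f(w,A\cup B)\subseteq f(w,A)\cup f(w,B)$. So it suffices to exhibit a model meeting all of the listed conditions but violating the disjunction condition: in such a model every axiom of $\mathbf{VCn}$ is valid and the rules preserve validity, hence every theorem of $\mathbf{VCn}$ is valid, yet CA is not.

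The second step is the model. Let $W=\{0,1,2,3\}$, with two propositional variables chosen so that the four worlds have pairwise distinct valuations; then every subset of $W$ is a truth set. For $i\in\{1,2,3\}$ put $f(i,A)=\{i\}\cap A$; this satisfies every condition listed above, so all axioms hold at $1,2,3$ with nothing to check. At $0$ put $f(0,A)=\{0\}$ whenever $0\in A$, $f(0,\{2,3\})=\{2\}$, and $f(0,A)=A$ for every other $A\subseteq\{1,2,3\}$ (in particular $f(0,\{1\})=\{1\}$ and $f(0,\{1,2,3\})=\{1,2,3\}$). Since $f(0,\{1,2,3\})=\{1,2,3\}\not\subseteq\{1\}\cup\{2\}=f(0,\{1\})\cup f(0,\{2,3\})$, choosing $\varphi,\psi,\chi$ with $[\varphi]=\{1\}$, $[\psi]=\{2,3\}$ and $[\chi]=\{1,2\}$ gives $0\Vdash\varphi>\chi$ and $0\Vdash\psi>\chi$ but $0\not\Vdash\varphi\lor\psi>\chi$, an instance of CA refuted at $0$. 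It then remains only to verify, at world $0$, the conditions for ID, strong centering, CV, CSO and MOD$'$. ID and strong centering are immediate from the definition; MOD$'$ is vacuous, because $f(0,A)=\emptyset$ only when $A=\emptyset$, which forces $W\setminus A=W$; and CM, CC, RCEC are automatic. So the genuine work is the verification of CV and CSO at $0$, after which soundness delivers $\nvdash_{\mathbf{VCn}}\mathrm{CA}$.

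The verification of CV and CSO at $0$ is the main obstacle, and it is exactly the point of the proposition: in Lewis' sphere semantics CV, CSO and the remaining principles jointly entail the disjunction condition, so the model must satisfy them without the comparative-similarity structure that would produce it. The construction is shaped precisely for this. When $A$ or $B$ contains $0$, CV and CSO hold trivially by centering, so suppose $A,B\subseteq\{1,2,3\}$. For CV: if $A\neq\{2,3\}$ then $f(0,A)=A$, so $f(0,A\cap B)\subseteq A\cap B\subseteq A=f(0,A)$; and if $A=\{2,3\}$, the hypothesis $f(0,A)\cap B\neq\emptyset$ forces $2\in B$, so $A\cap B\in\{\{2\},\{2,3\}\}$ and $f(0,A\cap B)=\{2\}=f(0,A)$ in either case. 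For CSO: if neither $A$ nor $B$ equals $\{2,3\}$ then $f(0,A)=A$ and $f(0,B)=B$, and $f(0,A)\subseteq B$ together with $f(0,B)\subseteq A$ already forces $A=B$; and if, say, $A=\{2,3\}$, then $f(0,\{2,3\})=\{2\}\subseteq B$ together with $f(0,B)\subseteq\{2,3\}$ forces $B\in\{\{2\},\{2,3\}\}$, and in each of these $f(0,B)=\{2\}=f(0,A)$. So the single contraction $\{2,3\}\mapsto\{2\}$ is just strong enough to break $f(0,A\cup B)\subseteq f(0,A)\cup f(0,B)$ and just weak enough to leave CV and CSO standing, which is what makes the proposition true.
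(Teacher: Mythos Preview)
Your proof is correct and follows essentially the same approach as the paper: both construct a four-world selection-function frame in which $f(0,\cdot)$ sends every subset not containing $0$ to itself except for one distinguished doubleton that is collapsed to a singleton, then verify (id), (cent), (mod$'$), (cv), (cso) while the disjunction condition for CA fails at $0$. Your countermodel is isomorphic to the paper's via a relabeling of $\{1,2,3\}$, the only inessential difference being your simpler choice $f(i,A)=\{i\}\cap A$ at the non-zero worlds in place of the paper's uniform clause $g(i,X)=X$ when $i\notin X$.
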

\begin{proof}
Let $U=\{0,1,2,3\}$, $A=\{1,2\}$, and $B=\{1,3\}$. Define $g:U\times\wp(U)\to\wp(U)$
as follows:
\[
\begin{aligned}g(i,X) & =\begin{cases}
\{1\} & \mbox{if }X=A\mbox{ and }i=0\\
\{i\} & \mbox{if }i\in X\\
X & \mbox{otherwise}
\end{cases}\end{aligned}
\]
Now I verify that $g$ satisfies the following conditions: for all
$i\in U$ and $X,Y\in\wp(U)$

\begin{lyxlist}{00.00.0000}
\item [{(id)}] $g(i,X)\subseteq X$
\item [{(mod)}] $g(i,X)=\emptyset\Longrightarrow g(i,Y)\cap X=\emptyset$
\item [{(cv)}] $g(i,X)\cap Y\neq\emptyset\Longrightarrow g(i,X\cap Y)\subseteq g(i,X)$
\item [{(cso)}] $g(i,X)\subseteq Y\mbox{ and }g(i,Y)\subseteq X\Longrightarrow g(i,X)=g(i,Y)$
\item [{(cent)}] $i\in X\Longrightarrow g(i,X)=\{i\}$
\end{lyxlist}
(id) and (cent) are obvious. (mod) holds since $g(i,X)=\emptyset$
iff $X=\emptyset$. It remains to verify (cv) and (cso). For (cv),
suppose $g(i,X)\cap Y\neq\emptyset$. Consider the following cases:
\begin{enumerate}
\item $X=A$ and $i=0$. Then $g(i,X)=\{1\}$. Since $g(i,X)\cap Y\neq\emptyset$,
we have $1\in Y$. Hence $X\cap Y=X$ or $X\cap Y=\{1\}$. In both
cases, we have $g(i,X\cap Y)=g(i,X)$.
\item $i\in X$. Then $g(i,X)=\{i\}$. Since $g(i,X)\cap Y\neq\emptyset$,
we have $i\in Y$. Then $i\in X\cap Y$. Hence $g(i,X\cap Y)=\{i\}=g(i,X)$.
\item $X\ne A$ or $i\neq0$, and $i\notin X$. Then $g(i,X)=X$. Since
$i\notin X$, we have $i\notin X\cap Y$. Then either $g(i,X\cap Y)=\{1\}$
or $g(i,X\cap Y)=X\cap Y$. If $g(i,X\cap Y)=X\cap Y$, we have $g(i,X\cap Y)\subseteq X=g(i,X)$.
If $g(i,X\cap Y)\neq X\cap Y$ and $g(i,X\cap Y)=\{1\}$, by the definition
of $g$, we have $X\cap Y=A$. Hence $1\in X$ and $g(i,X\cap Y)\subseteq g(i,X)$.
.
\end{enumerate}
For (cso), suppose $g(i,X)\subseteq Y$ and $g(i,Y)\subseteq X$.
Consider the following cases:
\begin{enumerate}
\item $X=A$ and $i=0$. Then $g(i,X)=\{1\}$. Since $g(i,X)\subseteq Y$,
we have $1\in Y$. Since $g(i,Y)\subseteq X$ and $i\notin X$, we
have $Y=A$ or $g(i,Y)=Y\subseteq X$. In the former case, we have
$g(i,X)=g(i,Y)$. In the latter case, we have $Y=\{1\}$, and hence
$g(i,Y)=\{1\}=g(i,X)$.
\item $i\in X$. Then $g(i,X)=\{i\}$. Since $g(i,X)\subseteq Y$, we have
$i\in Y$. Hence $g(i,Y)=\{i\}=g(i,X)$.
\item $X\neq A$ or $i\neq0$, and $i\notin X$. Then $g(i,X)=X$. Since
$g(i,X)\subseteq Y$, we have $X\subseteq Y$. If $Y=A$ and $i=0$,
then $g(i,Y)=\{1\}$. By $g(i,Y)\subseteq X$, we have $1\in X$.
Then by $X\subseteq Y=A$, we have $X=\{1\}$ or $X=Y$. In both cases
we have $g(i,X)=g(i,Y)$. If $i\in Y$, then $g(i,Y)=\{i\}$. Since
$g(i,Y)\subseteq X$, we have $i\in X$, contradicting that $i\notin X$.
In other cases, we have $g(i,Y)=Y$. Since $g(i,Y)\subseteq X$, we
have $Y\subseteq X$. Hence $g(i,X)=g(i,Y)$. 
\end{enumerate}
Given a model $\mathfrak{M}=(W,f,V)$, the truth set of $\varphi$
in $\mathfrak{M}$, denoted $[\varphi]^{\mathfrak{M}}$, is inductively
defined as follows:
\begin{itemize}
\item $[p]^{\mathfrak{M}}=V(p)$ for $p\in PV$
\item $[\neg\varphi]^{\mathfrak{M}}=W-[\varphi]^{\mathfrak{M}}$
\item $[\varphi\land\psi]^{\mathfrak{M}}=[\varphi]^{\mathfrak{M}}\cap[\psi]^{\mathfrak{M}}$
\item $[\varphi>\psi]^{\mathfrak{M}}=\{w\in W\mid f(w,[\varphi]^{\mathfrak{M}})\subseteq[\psi]^{\mathfrak{M}}\}$
\end{itemize}
\noindent{}We say that $\varphi$ is valid in $\mathfrak{F}=(W,f)$
if for all models $\mathfrak{M}$ based on $\mathfrak{F}$, $[\varphi]^{\mathfrak{M}}=W$.
Let $\mathfrak{G}=(U,g)$. By the frame conditions that $\mathfrak{G}$
satisfies, it can be easily verified that all axioms in $\mathbf{VCn}$
are valid in $\mathfrak{G}$, and $\mathfrak{G}$ preserves validity
for the rule RCEC. But CA is not valid in $\mathfrak{G}$, since $g(0,A\cup B)=\{1,2,3\}\nsubseteq\{1,3\}=g(0,A)\cup g(0,B)$.
Therefore $\nvdash_{\mathbf{Vn}}\mathrm{CA}$.

\end{proof}
\begin{cor}
$\nvdash_{\mathbf{Vn}}\mathrm{CA}$\footnote{Nute also gave the axiomatization $\mathbf{VWn}=\langle\mathrm{PC,ID,CV,MOD',CSO,CMP;RCEC},\mathrm{RCK}\rangle$
for Lewis' system $\mathbf{VW}$. So neither is $\mathrm{CA}$ derivable
from $\mathbf{VWn}$. The reason why $\mathrm{CA}$ is missing from
Nute's axiomatization is not clear, since no explicit proof of completeness
of these systems was given in his writings. I guess the reason may
be that $\mathrm{CA}$ is derivable in his axiomatization of $\mathbf{C2}$.}
\end{cor}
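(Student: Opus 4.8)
The plan is to obtain this immediately from Proposition 1, using the fact that $\mathbf{Vn}$ is a subsystem of $\mathbf{VCn}$. First I would observe that every axiom and inference rule of $\mathbf{Vn}$, namely PC, ID, CM, CC, CV, MOD$'$, CSO, and RCEC, also belongs to $\mathbf{VCn}$, which differs only by the two additional axioms CMP and CS. Hence any $\mathbf{Vn}$-derivation is, verbatim, also a $\mathbf{VCn}$-derivation, so $\vdash_{\mathbf{Vn}}\varphi$ entails $\vdash_{\mathbf{VCn}}\varphi$ for every $\varphi$.

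Given this inclusion, the argument is by contradiction: if $\vdash_{\mathbf{Vn}}\mathrm{CA}$ then $\vdash_{\mathbf{VCn}}\mathrm{CA}$, contradicting Proposition 1; therefore $\nvdash_{\mathbf{Vn}}\mathrm{CA}$. Equivalently, and perhaps more transparently, one may reuse the countermodel directly: the frame $\mathfrak{G}=(U,g)$ constructed in the proof of Proposition 1 validates all axioms of $\mathbf{VCn}$ and hence, a fortiori, all axioms of $\mathbf{Vn}$, and it preserves validity under RCEC; since CA fails at the world $0$ on $\mathfrak{G}$, soundness of $\mathbf{Vn}$ with respect to $\mathfrak{G}$ yields the claim.

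There is essentially no obstacle here; the one thing to verify is the syntactic inclusion $\mathbf{Vn}\subseteq\mathbf{VCn}$, which is immediate from the definitions (equivalently, that simply ignoring the soundness checks for CMP and CS leaves the argument of Proposition 1 intact). In fact the proof of Proposition 1 already records this conclusion at its last line, so the corollary merely isolates it for reference, and the analogous remark in the footnote for $\mathbf{VWn}$ follows by the same reasoning once one checks that $\mathfrak{G}$ also validates MOD$'$, CSO, CMP and preserves RCK, which it does since it satisfies (id), (mod), (cv), (cso) and (cent).
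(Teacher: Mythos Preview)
Your proposal is correct and matches the paper's approach exactly: the corollary is immediate from Proposition~1 via the inclusion $\mathbf{Vn}\subseteq\mathbf{VCn}$, and the paper gives no separate proof beyond labeling it a corollary. Your observation that the last line of the proof of Proposition~1 already states $\nvdash_{\mathbf{Vn}}\mathrm{CA}$ is apt; the corollary merely isolates this for emphasis.
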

Now I show that by replacing $\mathrm{MOD}'$ with CA in the corresponding
systems, Nute's axiomatizations can be amended. Let
\[
\begin{aligned}\mathbf{Va} & =\langle\mathrm{PC,ID,CSO,DAE;RCK}\rangle\\
\mathbf{Vb} & =\langle\mathrm{PC,ID,MOD',PIE;RCK,RE}\rangle\\
\mathbf{Vc} & =\langle\mathrm{PC,ID,CM,CC,CV,CA,CSO;RCEC}\rangle\\
\mathbf{VCa} & =\langle\mathrm{PC,ID,CSO,DAE,CMP,CS;RCK}\rangle\\
\mathbf{VCb} & =\langle\mathrm{PC,ID,MOD',PIE,CMP,CS;RCK,RE}\rangle\\
\mathbf{VCc} & =\langle\mathrm{PC,ID,CM,CC,CV,CA,CSO,CMP,CS;RCEC}\rangle,
\end{aligned}
\]
where $\mathbf{Va}$ and $\mathbf{VCa}$ are Lewis' first axiomatizations
of $\mathbf{V}$ and $\mathbf{VC}$, respectively; $\mathbf{Vb}$
and $\mathbf{VCb}$ are his second axiomatizations; $\mathbf{Vc}$
and $\mathbf{VCc}$ are amendments of Nute's systems for $\mathbf{V}$
and $\mathbf{VC}$, respectively. To prove that these amendments are
equivalent to Lewis' original systems, it suffices to prove that $\mathbf{Vc}$
is equivalent to $\mathbf{Va}$ or $\mathbf{Vb}$. For completeness,
I will show that $\mathbf{Vc}$ is equivalent to both $\mathbf{Va}$
and $\mathbf{Vb}$.
\begin{prop}
\label{prop:Vc=00003DVa=00003DVb}$\mathbf{Vc}=\mathbf{Va}=\mathbf{Vb}$.
\end{prop}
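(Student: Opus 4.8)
The plan is to prove each equality by mutual derivability: for a pair of systems it suffices to check that every axiom of one is a theorem of the other and that every primitive rule of one is admissible in the other, since provability then becomes symmetric and the two deductive closures coincide. Concretely I would first establish $\mathbf{Vc}=\mathbf{Va}$ in full, and then obtain $\mathbf{Vc}=\mathbf{Vb}$ either by combining it with Lewis's known equivalence $\mathbf{Va}=\mathbf{Vb}$ or, for completeness, by spelling out the corresponding derivations directly on the $\mathbf{Vb}$ side.

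Before the substantive work I would record the routine rule-bookkeeping. From RCK one gets RCEC, RCN, RCM, and the axioms CM, CC (as instances of RCK with $n=1,1,2$), so in the presence of PC the rule RCK and the package $\{\mathrm{RCEC},\mathrm{CM},\mathrm{CC}\}$ are interchangeable; and from ID, CSO and RCK one derives RCEA (if $\vdash\varphi\leftrightarrow\psi$ then $\varphi>\psi$ and $\psi>\varphi$ follow from ID by RCK, and CSO finishes), whence RCEC together with RCEA makes RE admissible. After this, the core of $\mathbf{Vc}=\mathbf{Va}$ is moving CA and CV across DAE. For $\mathbf{Vc}\subseteq\mathbf{Va}$: using the harmless theorems $\vdash\varphi\land\psi>\varphi$ and $\vdash\varphi>\varphi\lor\psi$ (from ID and RCK), CSO lets one pass between the antecedents $\varphi$ and $\varphi\land\psi$ whenever $\varphi>\psi$, and between $\varphi$ and $\varphi\lor\psi$ whenever $\varphi\lor\psi>\varphi$. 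Then CA drops out of the three disjuncts of DAE instantiated at $\varphi,\psi,\chi$: in the first two disjuncts one uses the relevant CSO-equivalence together with the matching conjunct of the hypothesis of CA, and the third disjunct gives $\varphi\lor\psi>\chi$ at once. For CV one instantiates DAE with the disjunction $(\varphi\land\psi)\lor(\varphi\land\neg\psi)$ (which is $\varphi$ up to RE) and consequent $\chi$: the disjunct yielding $\varphi>\neg\psi$ is excluded by $\neg(\varphi>\neg\psi)$, the disjunct yielding $\varphi>\psi$ gives a CSO-equivalence of $\varphi$ with $\varphi\land\psi$, and the third disjunct delivers $\varphi\land\psi>\chi$ directly. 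Conversely, $\mathbf{Va}\subseteq\mathbf{Vc}$ reduces to deriving DAE in $\mathbf{Vc}$: the ``$\Leftarrow$'' half of its biconditional is exactly CA, and for the ``$\Rightarrow$'' half note that $\neg(\varphi\lor\psi>\psi)$ forces $\neg(\varphi\lor\psi>\neg\varphi)$ (else ID, CC and RCEC would yield $\varphi\lor\psi>\psi$), so CV applied to the antecedent $\varphi\lor\psi$ strengthened by $\varphi$, using $(\varphi\lor\psi)\land\varphi\leftrightarrow\varphi$, gives $(\varphi\lor\psi>\chi)\to(\varphi>\chi)$, and symmetrically $(\varphi\lor\psi>\chi)\to(\psi>\chi)$.

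For the link with $\mathbf{Vb}$ I would, on the one hand, derive CV and CA inside $\mathbf{Vb}$ from PIE (instantiating its two disjuncts: when $\varphi>\neg\psi$ holds one collapses the conditional via ID and CC, and otherwise the PIE biconditional plus a few RCK steps yield the axiom), recover DAE in $\mathbf{Vb}$ exactly as above, and derive CSO in $\mathbf{Vb}$ from MOD$'$ and PIE --- the derivation the paper already flags as delicate, since with the reformulated MOD it can be run without RCEA or RE, whereas with MOD$'$ it cannot. On the other hand I would derive PIE in $\mathbf{Vc}$ (its ``$\Leftarrow$'' direction from CV, its ``$\Rightarrow$'' direction from CA together with $\varphi\leftrightarrow(\varphi\land\psi)\lor(\varphi\land\neg\psi)$), and then MOD$'$ in $\mathbf{Vc}$; the clean route here is to observe that $\mathbf{Vc}$ already proves DAE, CSO and ID, hence $\mathbf{Vc}\supseteq\mathbf{Va}\supseteq\mathbf{Vb}$ (the last inclusion by Lewis), so MOD$'$ comes for free once $\mathbf{Vc}=\mathbf{Va}$ is in hand. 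Assembling the inclusions $\mathbf{Va}\subseteq\mathbf{Vc}$, $\mathbf{Vc}\subseteq\mathbf{Va}$, and the transfers to and from $\mathbf{Vb}$ then gives $\mathbf{Va}=\mathbf{Vb}=\mathbf{Vc}$.

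I expect the main obstacle to be the case analysis extracting CA and CV from the three-way disjunction of DAE: each instantiation branches into three cases and one must choose the consequents and the CSO-equivalences so that every branch closes, which is fiddly but, as the arguments above indicate, goes through. A secondary difficulty, if one wants the $\mathbf{Vb}$-side argument to be genuinely self-contained rather than routed through Lewis's $\mathbf{Va}=\mathbf{Vb}$, is recovering the ``modal'' behaviour of impossible antecedents --- MOD$'$ itself --- inside the CA-based systems, since it has to be squeezed out of DAE rather than of any axiom speaking about impossibility directly; this is precisely why the cleanest presentation proves $\mathbf{Vc}=\mathbf{Va}$ first and leans on Lewis for the rest.
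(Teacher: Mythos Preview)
Your plan is sound and would succeed, but it is organised differently from the paper. The paper closes a three-step cycle $\mathbf{Vc}\supseteq\mathbf{Va}\supseteq\mathbf{Vb}\supseteq\mathbf{Vc}$, deriving each system's primitives inside the next one around the loop; you instead prove $\mathbf{Vc}=\mathbf{Va}$ by two separate inclusions and then attach $\mathbf{Vb}$. Your extraction of CV directly from DAE (via the instance with disjuncts $\varphi\land\psi$ and $\varphi\land\neg\psi$) is a genuine alternative: the paper never does this, obtaining CV instead from PIE inside $\mathbf{Vb}$. Both routes work; yours makes the $\mathbf{Va}/\mathbf{Vc}$ equivalence self-contained, while the paper's cyclic layout is more economical overall, since each axiom is derived exactly once rather than in two systems.

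Two places where your sketch wobbles. First, your worry that MOD$'$ is a ``secondary difficulty'' to be ``squeezed out of DAE'' is misplaced: once CA is available it is short. From $\neg\varphi>\varphi$ and ID one gets (by RCK and ex falso) $\neg\varphi>\neg\varphi\land\psi$, hence by CSO $\neg\varphi\land\psi>\varphi$; combining with $\varphi\land\psi>\varphi$ via CA and then RCEA on $(\neg\varphi\land\psi)\lor(\varphi\land\psi)\leftrightarrow\psi$ gives $\psi>\varphi$. This is exactly the paper's derivation inside $\mathbf{Va}$, and it transfers verbatim to $\mathbf{Vc}$. Second, your description of CA inside $\mathbf{Vb}$ (``when $\varphi>\neg\psi$ holds one collapses the conditional via ID and CC'') does not match how that derivation actually goes: the disjunct $\varphi\lor\psi>\neg\varphi$ is discharged by using CSO to identify $\varphi\lor\psi$ with $\psi$, not by a collapse, so in a self-contained $\mathbf{Vb}$ argument you must derive CSO \emph{before} CA rather than after. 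Neither point is fatal---the ingredients are all there---but the ordering and the estimate of where the work lies differ from what a clean write-up requires.
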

\begin{proof}
First, I show that $\mathbf{Vc}\supseteq\mathbf{Va}$. For simplification
of proofs, I will show first that RCM, RCE, RCN, and RCEA are derivable
in $\mathbf{Vc}$.

For RCM:
\begin{enumerate}
\item $\varphi\to\psi$\hfill Assumption
\item $\varphi\land\psi\leftrightarrow\varphi$\hfill (1), PC
\item $(\chi>\varphi\land\psi)\leftrightarrow(\chi>\varphi)$\hfill (2),
RCEC
\item $(\chi>\varphi\land\psi)\to(\chi>\psi)$\hfill CM, PC
\item $(\chi>\varphi)\to(\chi>\psi)$\hfill (3), (4), PC
\end{enumerate}
For RCE:
\begin{enumerate}
\item $\varphi\to\psi$\hfill Assumption
\item $(\varphi>\varphi)\to(\varphi>\psi)$\hfill (1), RCM
\item $\varphi>\varphi$\hfill ID
\item $\varphi>\psi$\hfill (2), (3), PC
\end{enumerate}
For RCN:
\begin{enumerate}
\item $\psi$\hfill Assumption
\item $\varphi\to\psi$\hfill (1), PC
\item $\varphi>\psi$\hfill (2), RCE
\end{enumerate}
For RCEA:
\begin{enumerate}
\item $\varphi\leftrightarrow\psi$\hfill Assumption
\item $\varphi\to\psi$, $\psi\to\varphi$\hfill (1), PC
\item $\varphi>\psi$, $\psi>\varphi$\hfill (2), RCE
\item $(\varphi>\chi)\leftrightarrow(\psi>\chi)$\hfill (3), CSO, PC
\end{enumerate}
Now I prove that RCK is derivable in $\mathbf{Vc}$. The case for
$n=0$ is just RCN. The case for $n=1$ is just RCM. It remains to
prove the case for $n=2$. The case for $n>2$ can be obtained similarly.
\begin{enumerate}
\item $\psi_{1}\land\psi_{2}\to\psi$\hfill Assumption
\item $(\varphi>\psi_{1}\land\psi_{2})\to(\varphi>\psi)$\hfill (1), RCM
\item $(\varphi>\psi_{1})\land(\varphi>\psi_{2})\to(\varphi>\psi_{1}\land\psi_{2})$\hfill
CC
\item $(\varphi>\psi_{1})\land(\varphi>\psi_{2})\to(\varphi>\psi)$\hfill
(2), (3), PC
\end{enumerate}
Next I prove that DAE is derivable in $\mathbf{Vc}$. By CA, it suffices
to prove $(\varphi\lor\psi>\varphi)\lor(\varphi\lor\psi>\psi)\lor((\varphi\lor\psi>\chi)\to(\varphi>\chi)\land(\psi>\chi))$.
\begin{enumerate}
\item $(\varphi\lor\psi>\chi)\land\neg(\varphi\lor\psi>\neg(\neg\varphi\lor\psi))\to((\varphi\lor\psi)\land(\neg\varphi\lor\psi)>\chi)$\hfill
CV
\item $(\varphi\lor\psi)\land(\neg\varphi\lor\psi)\leftrightarrow\psi$\hfill
PC
\item $(\varphi\lor\psi>\chi)\land\neg(\varphi\lor\psi>\neg(\neg\varphi\lor\psi))\to(\psi>\chi)$\hfill
(1), (2), RCEA, PC
\item $(\varphi\lor\psi>\chi)\land\neg(\varphi\lor\psi>\neg(\varphi\lor\neg\psi))\to((\varphi\lor\psi)\land(\varphi\lor\neg\psi)>\chi)$\hfill
CV
\item $(\varphi\lor\psi)\land(\varphi\lor\neg\psi)\leftrightarrow\varphi$\hfill
PC
\item $(\varphi\lor\psi>\chi)\land\neg(\varphi\lor\psi>\neg(\varphi\lor\neg\psi))\to(\varphi>\chi)$\hfill
(4), (5), RCEA, PC
\item $\neg(\neg\varphi\lor\psi)\to\varphi$\hfill PC
\item $(\varphi\lor\psi>\neg(\neg\varphi\lor\psi))\to(\varphi\lor\psi>\varphi)$\hfill
(7), RCM
\item $\neg(\varphi\lor\psi>\varphi)\to\neg(\varphi\lor\psi>\neg(\neg\varphi\lor\psi))$\hfill
(8), PC
\item $\neg(\varphi\lor\neg\psi)\to\psi$\hfill PC
\item $(\varphi\lor\psi>\neg(\varphi\lor\neg\psi))\to(\varphi\lor\psi>\psi)$\hfill
(10), RCM
\item $\neg(\varphi\lor\psi>\psi)\to\neg(\varphi\lor\psi>\neg(\varphi\lor\neg\psi))$\hfill
(11), PC
\item $\neg(\varphi\lor\psi>\varphi)\land\neg(\varphi\lor\psi>\psi)\land(\varphi\lor\psi>\chi)\to(\varphi>\chi)\land(\psi>\chi)$\hfill
(3), (6), (9), (12), PC
\item $(\varphi\lor\psi>\varphi)\lor(\varphi\lor\psi>\psi)\lor((\varphi\lor\psi>\chi)\to(\varphi>\chi)\land(\psi>\chi))$\hfill
(13), PC
\end{enumerate}
This completes the proof of $\mathbf{Vc}\supseteq\mathbf{Va}$.

Now I prove $\mathbf{Va}\supseteq\mathbf{Vb}$. First I prove that
RCE, RCEA, and RCEC are derivable in $\mathbf{Va}$.

For RCE:
\begin{enumerate}
\item $\varphi\to\psi$\hfill Assumption
\item $(\varphi>\varphi)\to(\varphi>\psi)$\hfill (1), RCK
\item $\varphi>\varphi$\hfill ID
\item $\varphi>\psi$\hfill (2), (3), PC
\end{enumerate}
For RCEA:
\begin{enumerate}
\item $\varphi\leftrightarrow\psi$\hfill Assumption
\item $\varphi\to\psi$, $\psi\to\varphi$\hfill (1), PC
\item $(\varphi>\varphi)\to(\varphi>\psi)$, $(\psi>\psi)\to(\psi>\varphi)$\hfill
(2), RCK
\item $\varphi>\varphi$, $\psi>\psi$\hfill ID
\item $\varphi>\psi$, $\psi>\varphi$\hfill (3), (4), PC
\item $(\varphi>\chi)\leftrightarrow(\psi>\chi)$\hfill (5), CSO, PC
\end{enumerate}
For RCEC:
\begin{enumerate}
\item $\varphi\leftrightarrow\psi$\hfill Assumption
\item $\varphi\to\psi$, $\psi\to\varphi$\hfill (1), PC
\item $(\chi>\varphi)\to(\chi>\psi)$, $(\chi>\psi)\to(\chi>\varphi)$\hfill
(2), RCK
\item $(\chi>\varphi)\leftrightarrow(\chi>\psi)$\hfill (3), PC
\end{enumerate}
From RCEA and RCEC, RE is obtained by a simple induction.

Then I prove that CA is derivable in $\mathbf{Va}$. By DAE, it suffices
to prove $(\varphi\lor\psi>\varphi)\lor(\varphi\lor\psi>\psi)\to\mathrm{CA}$.\footnote{This is of course not a proper notation. It is used to abbreviate
the rather long formula (scheme) $(\varphi\lor\psi>\varphi)\lor(\varphi\lor\psi>\psi)\to((\varphi>\chi)\land(\psi>\chi)\to(\varphi\lor\psi>\chi))$.
I will use this kind of abbreviation occasionally below. The abbreviated
formulas should be easily recovered from context.}
\begin{enumerate}
\item $\varphi\to\varphi\lor\psi$, $\psi\to\varphi\lor\psi$\hfill PC
\item $\varphi>\varphi\lor\psi$, $\psi>\varphi\lor\psi$\hfill (1), RCE
\item $(\varphi\lor\psi>\varphi)\to((\varphi>\chi)\to(\varphi\lor\psi>\chi))$\hfill
(2), CSO, PC
\item $(\varphi\lor\psi>\psi)\to((\psi>\chi)\to(\varphi\lor\psi>\chi))$\hfill
(2), CSO, PC
\item $(\varphi\lor\psi>\varphi)\lor(\varphi\lor\psi>\psi)\to\mathrm{CA}$\hfill
(3), (4), PC
\end{enumerate}
Now I prove that MOD' is derivable in $\mathbf{Va}$.
\begin{enumerate}
\item $\varphi\land\neg\varphi\to\varphi\land\psi$\hfill PC
\item $(\neg\varphi>\varphi)\land(\neg\varphi>\neg\varphi)\to(\neg\varphi>\neg\varphi\land\psi)$\hfill
(1), RCK
\item $\neg\varphi>\neg\varphi$\hfill ID
\item $\neg\varphi\land\psi\to\neg\varphi$\hfill PC
\item $\neg\varphi\land\psi>\neg\varphi$\hfill (4), RCE
\item $(\neg\varphi>\varphi)\to(\neg\varphi\land\psi>\varphi)$\hfill (2),
(3), (5), CSO, PC
\item $\varphi\land\psi\to\varphi$\hfill PC
\item $\varphi\land\psi>\varphi$\hfill (7), RCE
\item $(\neg\varphi>\varphi)\to((\neg\varphi\land\psi)\lor(\varphi\land\psi)>\varphi)$\hfill
(6), (8), CA, PC
\item $(\neg\varphi\land\psi)\lor(\varphi\land\psi)\leftrightarrow\psi$\hfill
PC
\item $(\neg\varphi>\varphi)\to(\psi>\varphi)$\hfill (9), (10), RE
\end{enumerate}
Next I prove that PIE is derivable in $\mathbf{Va}$. It suffices
to prove
\begin{lyxlist}{00.00.0000}
\item [{(a)}] $(\varphi\land\psi>\chi)\to(\varphi>(\psi\to\chi))$, and
\item [{(b)}] $(\varphi>\neg\psi)\lor((\varphi>(\psi\to\chi))\to(\varphi\land\psi>\chi))$
\end{lyxlist}
For (a): Let $\alpha=(\varphi\land\psi)\lor(\varphi\land\neg\psi)$
\begin{enumerate}
\item $\varphi\land\neg\psi\to\neg\psi\lor\chi$\hfill PC
\item $\varphi\land\neg\psi>\neg\psi\lor\chi$\hfill (1), RCE
\item $\chi\to\neg\psi\lor\chi$\hfill PC
\item $(\varphi\land\psi>\chi)\to(\varphi\land\psi>\neg\psi\lor\chi)$\hfill
(3), RCK
\item $(\varphi\land\psi>\neg\psi\lor\chi)\land(\varphi\land\neg\psi>\neg\psi\lor\chi)\to(\alpha>\neg\psi\lor\chi)$\hfill
CA
\item $(\varphi\land\psi>\chi)\to(\alpha>\neg\psi\lor\chi)$\hfill (2),
(4), (5), PC
\item $\alpha\leftrightarrow\varphi,$ $\neg\psi\lor\chi\leftrightarrow(\psi\to\chi)$\hfill
PC
\item $(\varphi\land\psi>\chi)\to(\varphi>(\psi\to\chi))$\hfill (6), (7),
RE
\end{enumerate}
For (b): Let $\alpha=(\varphi\land\neg\psi)\lor(\varphi\land\psi)$
\begin{enumerate}
\item $(\alpha>\varphi\land\neg\psi)\lor(\alpha>\varphi\land\psi)\lor((\alpha>(\psi\to\chi))\to(\varphi\land\psi>(\psi\to\chi)))$\hfill
DAE, PC
\item $\alpha\leftrightarrow\varphi$\hfill PC
\item $(\varphi>\varphi\land\neg\psi)\lor(\varphi>\varphi\land\psi)\lor((\varphi>(\psi\to\chi))\to(\varphi\land\psi>(\psi\to\chi)))$\hfill
(1), (2), RE
\item $\psi\land(\psi\to\chi)\to\chi$\hfill PC
\item $(\varphi\land\psi>\psi)\land(\varphi\land\psi>(\psi\to\chi))\to(\varphi\land\psi>\chi)$\hfill
(4), RCK
\item $\varphi\land\psi>\psi$\hfill PC, RCE
\item $(\varphi>\varphi\land\neg\psi)\lor(\varphi>\varphi\land\psi)\lor((\varphi>(\psi\to\chi))\to(\varphi\land\psi>\chi))$\hfill
(3), (5), (6), PC
\item $(\varphi\land\psi)\land(\psi\to\chi)\to\chi$\hfill PC
\item $(\mbox{\ensuremath{\varphi}>\ensuremath{\varphi\land\psi})\ensuremath{\land}(\ensuremath{\varphi}>(\ensuremath{\psi\to\chi}))\ensuremath{\to}(\ensuremath{\varphi}>\ensuremath{\chi})}$\hfill
(8), RCK
\item $(\varphi>\varphi\land\psi)\land(\varphi>(\psi\to\chi))\to(\varphi\land\psi>\chi)$\hfill
(6), (9), CSO
\item $\varphi\land\neg\psi\to\neg\psi$\hfill PC
\item $(\varphi>\varphi\land\neg\psi)\to(\varphi>\neg\psi)$\hfill (11),
RCK
\item $\neg(\varphi>\neg\psi)\to\neg(\varphi>\varphi\land\neg\psi)$\hfill
(12), PC
\item $\neg(\varphi>\neg\psi)\land\neg(\varphi>\varphi\land\psi)\land(\varphi>(\psi\to\chi))\to(\varphi\land\psi>\chi)$\hfill
(7), (13), PC
\item $\neg(\varphi>\neg\psi)\land(\varphi>(\psi\to\chi))\to(\varphi\land\psi>\chi)$\hfill
(10), (14), PC
\item $(\varphi>\neg\psi)\lor((\varphi>(\psi\to\chi))\to(\varphi\land\psi>\chi))$\hfill
(15), PC
\end{enumerate}
This completes the proof of $\mathbf{Va}\supseteq\mathbf{Vb}$.

Now I prove $\mathbf{Vb}\supseteq\mathbf{Vc}$. The derivation of
CC and CM is straightforward using RCK. The rule RCEC is a special
case of RE. It remains to show that CA, CV, and CSO are derivable
in $\mathbf{Vb}$.

For CV:
\begin{enumerate}
\item $\neg(\varphi>\neg\psi)\land(\varphi>(\psi\to\chi))\to(\varphi\land\psi>\chi)$\hfill
PIE, PC
\item $\chi\to(\psi\to\chi)$\hfill PC
\item $(\varphi>\chi)\to(\varphi>(\psi\to\chi))$\hfill (2), RCK
\item $(\varphi>\chi)\land\neg(\varphi>\neg\psi)\to(\varphi\land\psi>\chi)$\hfill
(1), (3), PC
\end{enumerate}
For CSO: Let $\alpha=\varphi>\neg\psi$, $\beta=\psi>\neg\varphi$.
By PC, it suffices to prove
\begin{enumerate}
\item [(c)]$\neg\alpha\land\neg\beta\land(\varphi>\psi)\land(\psi>\varphi)\to((\varphi>\chi)\leftrightarrow(\psi>\chi))$,
\item [(d)]$\alpha\land(\varphi>\psi)\land(\psi>\varphi)\to((\varphi>\chi)\leftrightarrow(\psi>\chi))$,
and
\item [(e)]$\beta\land(\varphi>\psi)\land(\psi>\varphi)\to((\varphi>\chi)\leftrightarrow(\psi>\chi))$.
\end{enumerate}
For (c):
\begin{enumerate}
\item $\neg\alpha\to((\varphi\land\psi>\chi)\leftrightarrow(\varphi>(\psi\to\chi)))$\hfill
PIE
\item $\psi\land(\psi\to\chi)\to\chi$\hfill PC
\item $(\varphi>\psi)\land(\varphi>(\psi\to\chi))\to(\varphi>\chi)$\hfill
(2), RCK
\item $\chi\to(\psi\to\chi)$\hfill PC
\item $(\varphi>\chi)\to(\varphi>(\psi\to\chi))$\hfill (4), RCK
\item $\neg\alpha\land(\varphi>\psi)\to((\varphi>\chi)\leftrightarrow(\varphi\land\psi>\chi))$\hfill
(1), (3), (5), PC
\item $\neg\beta\land(\psi>\varphi)\to((\psi>\chi)\leftrightarrow(\varphi\land\psi>\chi))$\hfill
analogous to (1)--(6)
\item $\neg\alpha\land\neg\beta\land(\varphi>\psi)\land(\psi>\varphi)\to((\varphi>\chi)\leftrightarrow(\psi>\chi))$\hfill
(6), (7), PC
\end{enumerate}
For (d):
\begin{enumerate}
\item $\neg\psi\land\psi\to\chi$\hfill PC
\item $\alpha\land(\varphi>\psi)\to(\varphi>\chi)$\hfill (1), RCK
\item $\alpha\land(\varphi>\psi)\to(\varphi>\neg\varphi)$\hfill (2), PC
\item $(\varphi>\neg\varphi)\to(\psi>\neg\varphi)$\hfill MOD
\item $\alpha\land(\varphi>\psi)\to(\psi>\neg\varphi)$\hfill (3), (4),
PC
\item $\alpha\land(\varphi>\psi)\land(\psi>\varphi)\to(\psi>\varphi\land\neg\varphi)$\hfill
(5), RCK, PC
\item $\varphi\land\neg\varphi\to\chi$\hfill PC
\item $\alpha\land(\varphi>\psi)\land(\psi>\varphi)\to(\psi>\chi)$\hfill
(6), (7), RCK, PC
\item $\alpha\land(\varphi>\psi)\land(\psi>\varphi)\to((\varphi>\chi)\leftrightarrow(\psi>\chi))$\hfill
(2), (8), PC
\end{enumerate}
Note that in the above derivation, I use MOD instead of MOD', so that
I can dispense with RE. If MOD' is used instead, then the derivation
is longer, with an additional line of transforming MOD' to MOD, using
RE.

(e) can be proved analogously to (d).

For CA: Let $\alpha=\varphi\lor\psi>\neg\varphi$, $\beta=\varphi\lor\psi>\neg\psi$.
It suffices to prove
\begin{enumerate}
\item [(f)]$\neg\alpha\land\neg\beta\land(\varphi>\chi)\land(\psi>\chi)\to(\varphi\lor\psi>\chi)$,
\item [(g)]$\alpha\land(\varphi>\chi)\land(\psi>\chi)\to(\varphi\lor\psi>\chi)$,
and
\item [(h)]$\beta\land(\varphi>\chi)\land(\psi>\chi)\to(\varphi\lor\psi>\chi)$.
\end{enumerate}
For (f):
\begin{enumerate}
\item $\neg\alpha\land((\varphi\lor\psi)\land\varphi>\chi)\to(\varphi\lor\psi>(\varphi\to\chi))$\hfill
PIE, PC
\item $(\varphi\lor\psi)\land\varphi\leftrightarrow\varphi$\hfill PC
\item $\neg\alpha\land(\varphi>\chi)\to(\varphi\lor\psi>(\varphi\to\chi))$\hfill
(1), (2), RE
\item $\neg\beta\land(\psi>\chi)\to(\varphi\lor\psi>(\psi\to\chi))$\hfill
analogous to (1)--(3)
\item $\varphi\lor\psi>\varphi\lor\psi$\hfill ID
\item $(\varphi\lor\psi)\land(\varphi\to\chi)\land(\psi\to\chi)\to\chi$\hfill
PC
\item $(\varphi\lor\psi>(\varphi\to\chi))\land(\varphi\lor\chi>(\psi\to\chi))\to(\varphi\lor\psi>\chi)$\hfill
(5), (6), RCK, PC
\item $\neg\alpha\land\neg\beta\land(\varphi>\chi)\land(\psi>\chi)\to(\varphi\lor\psi>\chi)$\hfill
(3), (4), (7), PC
\end{enumerate}
For (g):
\begin{enumerate}
\item $(\varphi\lor\psi)\land\neg\varphi\to\psi$\hfill PC
\item $\varphi\lor\psi>\varphi\lor\psi$\hfill ID
\item $\alpha\to(\varphi\lor\psi>\psi)$\hfill (1), (2), RCK, PC
\item $\psi>\varphi\lor\psi$\hfill PC, RCK, ID
\item $\alpha\land(\psi>\chi)\to(\varphi\lor\psi>\chi)$\hfill (3), (4),
CSO
\item $\alpha\land(\varphi>\chi)\land(\psi>\chi)\to(\varphi\lor\psi>\chi)$\hfill
(5), PC
\end{enumerate}
(h) can be prove analogously to (g).

This completes the proof of $\mathbf{Vb}\supseteq\mathbf{Vc}$.

\end{proof}
\begin{cor}
\label{cor:VW-VC} $\mathbf{VCc}=\mathbf{VCa}=\mathbf{VCb}$.
\end{cor}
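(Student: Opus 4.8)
The plan is to notice that $\mathbf{VCa}$, $\mathbf{VCb}$, and $\mathbf{VCc}$ are obtained from $\mathbf{Va}$, $\mathbf{Vb}$, and $\mathbf{Vc}$, respectively, simply by adjoining the two extra axioms CMP and CS (the rules being unchanged in each case). So it suffices to establish the cyclic chain of inclusions $\mathbf{VCc}\supseteq\mathbf{VCa}\supseteq\mathbf{VCb}\supseteq\mathbf{VCc}$, and for this I would reuse, essentially verbatim, the derivations already given in the proof of Proposition~\ref{prop:Vc=00003DVa=00003DVb}.

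Concretely, to obtain $\mathbf{VCc}\supseteq\mathbf{VCa}$ I must derive every axiom and rule of $\mathbf{VCa}$ inside $\mathbf{VCc}$: the axioms CMP and CS belong to $\mathbf{VCc}$ outright and need no work, while ID, CSO, DAE and the rule RCK were already shown derivable in $\mathbf{Vc}$ in the proof of Proposition~\ref{prop:Vc=00003DVa=00003DVb}; since $\mathbf{VCc}$ extends $\mathbf{Vc}$, those same derivations carry over. The remaining two inclusions $\mathbf{VCa}\supseteq\mathbf{VCb}$ and $\mathbf{VCb}\supseteq\mathbf{VCc}$ are treated in exactly the same way: CMP and CS are present on both sides and require no derivation, and each of the other axioms and rules to be recovered (RE, MOD', PIE on one side; CM, CC, CV, CA, CSO, RCEC on the other) was derived in the corresponding part of the proof of Proposition~\ref{prop:Vc=00003DVa=00003DVb} using only principles that remain available in the $\mathbf{VC}$-systems.

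The only point that genuinely needs checking — and the only way the argument could break — is that none of the derivations in the proof of Proposition~\ref{prop:Vc=00003DVa=00003DVb} silently appeals to CMP or CS. Inspecting those derivations, every line is justified by PC, ID, CSO, DAE, MOD', PIE, CM, CC, CV, CA, RCK, RCE, RCM, RCN, RCEA, RCEC, or RE, and never by CMP or CS, so adjoining these two axioms to both sides of each inclusion is harmless. Chaining the inclusions then yields $\mathbf{VCc}=\mathbf{VCa}=\mathbf{VCb}$. I do not anticipate any substantive obstacle here: the corollary is a bookkeeping consequence of Proposition~\ref{prop:Vc=00003DVa=00003DVb}, modulo the routine observation that the $\mathbf{V}$-level derivations are untouched by the presence of CMP and CS.
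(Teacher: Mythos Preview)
Your proposal is correct and matches the paper's own treatment: the corollary is stated without proof, as an immediate bookkeeping consequence of Proposition~\ref{prop:Vc=00003DVa=00003DVb}, and your observation that the $\mathbf{VC}$-systems differ from the $\mathbf{V}$-systems only by the added axioms CMP and CS (which play no role in any of the derivations of Proposition~\ref{prop:Vc=00003DVa=00003DVb}) is exactly the intended reasoning.
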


\section{\label{sec:A-New-Axiomatization}New Axiomatizations of Lewis' Conditional
Logics}

I propose the following new axiomatizations of Lewis' conditional
logics, which are denoted by $\mathbf{V}'$ and $\mathbf{VC}'$, respectively.
\[
\begin{aligned}\mathbf{V}' & =\langle\mathrm{PC,ID,CM,CA,CV,AC,RT;RCEC}\rangle\\
\mathbf{VC}' & =\langle\mathrm{PC,ID,CM,CA,CV,AC,RT,CMP,CS;RCEC}\rangle
\end{aligned}
\]

\noindent{}Both systems replace the axiom CSO by the axioms AC and
RT in $\mathbf{Vc}$ and $\mathbf{VCc}$, respectively. Meanwhile,
CC is omitted, since it is derivable from other axioms and rules.
I will prove that the new axiomatizations are equivalent to Lewis'
original ones. By Proposition~\ref{prop:Vc=00003DVa=00003DVb} and
Corollary~\ref{cor:VW-VC}, it suffices to prove that $\mathbf{V}'$
is equivalent to $\mathbf{Vc}$.
\begin{prop}
$\mathbf{V}'=\mathbf{Vc}$.
\end{prop}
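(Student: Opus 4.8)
The plan is to show that $\mathbf{V}'$ and $\mathbf{Vc}$ have exactly the same theorems. Since the two systems share PC, ID, CM, CA, CV and RCEC, it suffices to prove (i) that AC and RT are derivable in $\mathbf{Vc}$, and (ii) that CC and CSO are derivable in $\mathbf{V}'$. Throughout (i) I may use RCM, RCE, RCN, RCEA and RCK in $\mathbf{Vc}$, since these were already shown derivable in the proof of Proposition~\ref{prop:Vc=00003DVa=00003DVb}. For (ii) I first note that RCM follows from CM and RCEC exactly as in that proof (the derivation there uses only CM, RCEC and PC), hence RCE (with ID) and RCN are available in $\mathbf{V}'$ too.

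For direction (i) the key fact is a cumulativity observation: in $\mathbf{Vc}$, whenever $\varphi>\psi$, the antecedents $\varphi$ and $\varphi\land\psi$ are CSO-interchangeable. Indeed $\varphi>\varphi$ by ID, so $\varphi>\varphi\land\psi$ by CC, while $\varphi\land\psi>\varphi$ by RCE; hence CSO gives $(\varphi>\chi)\leftrightarrow(\varphi\land\psi>\chi)$. From this, AC is immediate, and RT follows too once RCEA is used to rewrite $\psi\land\varphi$ as $\varphi\land\psi$. This disposes of $\mathbf{V}'\subseteq\mathbf{Vc}$.

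For direction (ii) I would derive CSO first, using only AC and RT: assuming $\varphi>\psi$, $\psi>\varphi$ and $\varphi>\chi$, AC gives $\varphi\land\psi>\chi$, and the instance of RT obtained by swapping $\varphi$ and $\psi$, namely $(\psi>\varphi)\land(\varphi\land\psi>\chi)\to(\psi>\chi)$, yields $\psi>\chi$ — the happy point being that RT's middle term $\psi\land\varphi$ becomes $\varphi\land\psi$ under the swap, matching the output of AC verbatim, so no commutation step intervenes. The converse implication is symmetric, so CSO holds; RCEA then follows from CSO together with RCE, and RE by the usual induction on formula structure. It remains to derive CC, and this is where the real work is: the naive attempt — combining $\varphi>\psi$ and $\varphi>\chi$ into $\varphi>\psi\land\chi$, or first into $\varphi\land\psi>\psi\land\chi$ — is circular, since every such route again calls for a CC-instance. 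The fix is two nested applications of RT (cautious cut). Given $\varphi>\psi$ and $\varphi>\chi$: AC gives $\varphi\land\psi>\chi$, hence $\psi\land\varphi>\chi$ by RCEA; since $\chi\land(\psi\land\varphi)\to\psi\land\chi$ is a tautology, RCE gives $\chi\land(\psi\land\varphi)>\psi\land\chi$; applying RT to $\psi\land\varphi>\chi$ and $\chi\land(\psi\land\varphi)>\psi\land\chi$ yields $\psi\land\varphi>\psi\land\chi$; and a final application of RT to $\varphi>\psi$ and $\psi\land\varphi>\psi\land\chi$ yields $\varphi>\psi\land\chi$, i.e.\ CC. One must take care with the order of introduction — RCM, RCE, RCN, then CSO, then RCEA and RE, then CC — so that no step secretly uses CC.

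The step I expect to be the main obstacle is precisely the derivation of CC in $\mathbf{V}'$: once CSO is no longer a primitive, the cumulativity argument that trivializes CC in $\mathbf{Vc}$ is unavailable, and one has to spot that the conjunction $\psi\land\chi$ can be manufactured through a purely propositional entailment inside a double cautious cut, which is what breaks the apparent circularity. Everything else — the derivations of AC, RT, CSO, RCEA, RE, and the verification that the shared axioms line up — is routine manipulation with PC, RCEC and the derived rules.
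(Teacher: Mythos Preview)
Your proposal is correct and follows essentially the same route as the paper: derive AC and RT in $\mathbf{Vc}$ via the cumulativity observation (CC, ID, RCE, CSO), then in $\mathbf{V}'$ derive CSO from AC and RT by the swapped-variable instance of RT, obtain RCEA from CSO and RCE, and finally get CC by two nested applications of RT. The only cosmetic difference is that you obtain the intermediate premise $\chi\land(\psi\land\varphi)>\psi\land\chi$ directly by RCE on a tautology, whereas the paper reaches the equivalent $\chi\land\varphi\land\psi>\psi\land\chi$ via ID, CM and RCEA; the structure of the argument is otherwise identical.
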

\begin{proof}
First, I show that $\mathbf{Vc}\supseteq\mathbf{V}'$, i.e. AC and
RT are derivable in $\mathbf{Vc}$.

For AC:
\begin{enumerate}
\item $(\varphi>\varphi)\land(\varphi>\psi)\to(\varphi>\varphi\land\psi)$\hfill
CC
\item $\varphi>\varphi$\hfill ID
\item $(\varphi>\psi)\to(\varphi>\varphi\land\psi)$\hfill (1), (2), PC
\item $\varphi\land\psi\to\varphi$\hfill PC
\item $\varphi\land\psi>\varphi$\hfill (4), RCE
\item $(\varphi>\psi)\to(\varphi>\varphi\land\psi)\land(\varphi\land\psi>\varphi)$\hfill
(3), (5), PC
\item $(\varphi>\varphi\land\psi)\land(\varphi\land\psi>\varphi)\to((\varphi>\chi)\leftrightarrow(\varphi\land\psi>\chi))$\hfill
CSO
\item $(\varphi>\psi)\land(\varphi>\chi)\to(\varphi\land\psi>\chi)$\hfill
(6), (7), PC
\end{enumerate}
For RT:
\begin{enumerate}
\item $\psi\land\varphi\to\varphi$\hfill PC
\item $\psi\land\varphi>\varphi$\hfill (1), RCE
\item $\varphi>\varphi$\hfill ID
\item $(\varphi>\psi)\to(\varphi>\psi\land\varphi)$\hfill (3), CC, PC
\item $(\varphi>\psi)\to(\varphi>\psi\land\varphi)\land(\psi\land\varphi>\varphi)$\hfill
(2), (4), PC
\item $(\varphi>\psi\land\varphi)\land(\psi\land\varphi>\varphi)\to((\varphi>\chi)\leftrightarrow(\psi\land\varphi>\chi))$\hfill
CSO
\item $(\varphi>\psi)\land(\psi\land\varphi>\chi)\to(\varphi>\chi)$\hfill
(5), (6), PC
\end{enumerate}
Then I show that $\mathbf{V}'\supseteq\mathbf{Vc}$.

For CSO:
\begin{enumerate}
\item $(\varphi>\psi)\land(\varphi>\chi)\to(\varphi\land\psi>\chi)$\hfill
AC
\item $(\psi>\varphi)\land(\varphi\land\psi>\chi)\to(\psi>\chi)$\hfill
RT
\item $(\varphi>\psi)\land(\psi>\varphi)\land(\varphi>\chi)\to(\psi>\chi)$\hfill
(1), (2), PC
\item $(\varphi>\psi)\land(\psi>\varphi)\land(\psi>\chi)\to(\varphi>\chi)$\hfill
analogous to (1)--(3)
\item $(\varphi>\psi)\land(\psi>\varphi)\to((\varphi>\chi)\leftrightarrow(\psi>\chi))$\hfill
(3), (4), PC
\end{enumerate}
To prove CC, note that we have proved that RCE can be obtained from
PC, ID, CM, and RCEC in the proof of Proposition~\ref{prop:Vc=00003DVa=00003DVb}.
Since RCEA follows from RCE and CSO, we also have RCEA in $\mathbf{V}'$.
Now we have the following derivation for CC:
\begin{enumerate}
\item $\varphi\land\psi\land\chi>\varphi\land\psi\land\chi$\hfill ID
\item $\varphi\land\psi\land\chi>\psi\land\chi$\hfill (1), CM, PC
\item $\varphi\land\psi\land\chi\leftrightarrow\chi\land\varphi\land\psi$\hfill
PC
\item $\chi\land\varphi\land\psi>\psi\land\chi$\hfill (2), (3), RCEA,
PC
\item $(\varphi>\psi)\land(\varphi>\chi)\to(\varphi\land\psi>\chi)$\hfill
AC
\item $(\varphi>\psi)\land(\varphi>\chi)\to(\varphi\land\psi>\psi\land\chi)$\hfill
(4), (5), RT, PC
\item $\varphi\land\psi\leftrightarrow\psi\land\varphi$\hfill PC
\item $(\varphi\land\psi>\psi\land\chi)\to(\psi\land\varphi>\psi\land\chi)$\hfill
(7), RCEA
\item $(\varphi>\psi)\land(\psi\land\varphi>\psi\land\chi)\to(\varphi>\psi\land\chi)$\hfill
RT
\item $(\varphi>\psi)\land(\varphi>\chi)\to(\varphi>\psi\land\chi)$\hfill
(6), (8), (9), PC
\end{enumerate}
\end{proof}
\begin{cor}
 $\mathbf{VC}'=\mathbf{VCc}$.
\end{cor}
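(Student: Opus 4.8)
The plan is to derive this corollary from the preceding proposition in exactly the way Corollary~\ref{cor:VW-VC} was derived from Proposition~\ref{prop:Vc=00003DVa=00003DVb}. The key observation is structural: $\mathbf{VC}'$ is obtained from $\mathbf{V}'$ by adjoining the two axioms CMP and CS, $\mathbf{VCc}$ is obtained from $\mathbf{Vc}$ by adjoining the same two axioms, and the sole inference rule of all four systems is RCEC. Hence it suffices to revisit the two inclusions established in the proof of $\mathbf{V}'=\mathbf{Vc}$ and confirm that the presence of CMP and CS disturbs nothing.

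For $\mathbf{VCc}\supseteq\mathbf{VC}'$, I would check that AC and RT are derivable in $\mathbf{VCc}$. The derivations of AC and RT carried out above invoke only PC, ID, CM, CC, CSO and RCE (and hence RCEC) --- the rule RCE itself being obtained from PC, ID, CM, RCEC --- every one of which is available in $\mathbf{VCc}$; so those derivations transfer verbatim. The remaining axioms of $\mathbf{VC}'$, namely CMP and CS, are literally axioms of $\mathbf{VCc}$, and RCEC is its rule. Consequently every derivation in $\mathbf{VC}'$ can be reproduced in $\mathbf{VCc}$.

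For the converse $\mathbf{VC}'\supseteq\mathbf{VCc}$, I would derive CC and CSO in $\mathbf{VC}'$. The derivations above use only PC, ID, CM, AC, RT and RCEC, together with RCE and RCEA, which were shown to follow from these; all of them are present in $\mathbf{VC}'$. Since CMP and CS are once again shared axioms, $\mathbf{VC}'$ proves every axiom of $\mathbf{VCc}$, and the two systems coincide.

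There is no genuine mathematical obstacle here; the corollary is a routine transfer of the previous argument. The only point deserving attention is the bookkeeping check that no line in the proof of the preceding proposition --- including the imported fact that RCE is derivable from PC, ID, CM and RCEC --- ever appeals to CMP or CS. A quick scan of those derivations confirms it, so the result is immediate.
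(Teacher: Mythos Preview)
Your proposal is correct and matches the paper's approach: the corollary is stated without proof because it follows immediately from $\mathbf{V}'=\mathbf{Vc}$ once one notes that $\mathbf{VC}'$ and $\mathbf{VCc}$ are obtained by adjoining the same two axioms CMP and CS, which play no role in the derivations of the preceding proposition. Your write-up simply makes this routine transfer explicit.
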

The axiom CSO was criticized by \citet{Gabbay1972}. One may be inclined
to abandon it directly. However, the above new systems show that CSO
can be recovered from AC and RT. It should be easy to notice that
AC and RT correspond to cautious monotonicity and cautious cut (a.k.a.
cumulative transitivity) in nonmonotonic logics. Both cautious monotonicity
and cautious cut are regarded as the minimal requirements for nonmonotonic
consequences. If AC and RT are also taken to be minimal for conditional
logics, then the above proof shows that CSO is inevitable in conditional
logics. If CSO is inevitable, then RCEA is also inevitable, since
it follows from CSO and another very modest axiom CM. The new axiomatization
indicates that it is difficult to construct nonclassical conditional
logics for characterizing default conditionals. It also leads us to
a puzzle about the controversial axiom SDA, which is the converse
of CA.

\section{A Resolution of a Puzzle about SDA}

The axiom SDA suggests that conditionals with disjunctive antecedents
have conjunctive reading. For example, when I say that if John or
Mary comes to my party, I'll be happy, it is reasonable to conclude
that if John comes to my party I'll be happy, and if Mary comes to
my party I'll be happy. But if SDA is contained in any conditional
logic with the rule RCEA, the so called fallacy of strengthening the
antecedent which is rejected in all conditional logics will be recovered.
This can be shown by the following simple derivation:
\begin{enumerate}
\item $\varphi\leftrightarrow(\varphi\lor(\varphi\land\psi))$\hfill PC
\item $(\varphi\lor(\varphi\land\psi)>\chi)\to\varphi\land\psi>\chi$\hfill
SDA, PC
\item $(\varphi>\chi)\to(\varphi\land\psi>\chi)$\hfill (1), (2), RCEA,
PC
\end{enumerate}
There are mainly three approaches to solving this puzzle. The first
approach, adopted in \citep{Loewer1976,McKay1977,Nute1980a,Lewis1977},
is to abandon SDA and apply something other than logic such as translation
lore to account for the intuitive validity of SDA. The second approach,
adopted in \citep{Nute1975,Nute1978,Nute1980}, is to keep SDA while
giving up the rule RCEA by developing nonclassical conditional logics.
As we have seen in Section~\ref{sec:A-New-Axiomatization}, this
means that some other intuitively reasonable axioms such as AC or
RT have to be abandoned too. In \citep{Fine1975b}, both the first
two approaches were suggested. The third approach, adopted in \citep{Alonso-Ovalle2006,Klinedinst2007,Paoli2012},
is to give nonclassical interpretations for disjunction, so that the
disjunctive antecedents in conditionals have conjunctive reading.
All the approaches are somewhat ad hoc, in the sense that conditionals
with disjunctive antecedents are treated as special and different
from other conditionals.

It has been noticed that SDA has counterexamples in both counterfactual
and indicative conditionals. The following is one for counterfactuals
given in \citep{McKay1977}:
\begin{enumerate}
\item If Spain fought on the Axis side or fought on the Allied side, it
would fight on the Axis side.
\item If Spain fought on the Allied side, it would fight on the Axis side.
\end{enumerate}
By SDA, (1) implies (2). But obviously (2) is false even if (1) is
true. A similar counterexample for indicative conditionals was given
in \citep{Carlstrom1978}:
\begin{enumerate}
\item [(3)]If Ivan is playing tennis or playing baseball, then he is playing
baseball.
\item [(4)]If Ivan is playing tennis, then he is playing baseball.
\end{enumerate}
By SDA, (3) implies (4). But we can have (3) true and (4) false. Both
counterexamples have the following form: $\varphi\lor\psi>\varphi$
is true but $\psi>\varphi$ false. As far as I know, no other forms
of counterexamples of SDA have been discovered. Considering that SDA
has only counterexamples of such special forms, one can not resist
keeping SDA while explaining away such counterexamples by attributing
them as abnormal uses of conditionals with disjunctive antecedents.
But we still face the conflict between SDA and RCEA. Remarkably, one
of Lewis' axioms for conditional logics, namely the old-fashioned
axiom DAE, which has been neglected for a long time, can perfectly
account for both the intuitive validity of SDA and its counterexamples!
The axiom DAE says that either $\varphi\lor\psi>\varphi$ is true,
or $\varphi\lor\psi>\psi$ is true, or $(\varphi\lor\psi>\chi)$ is
logically equivalent to $(\varphi>\chi)\land(\psi>\chi)$. From DAE
it follows that
\[
\neg(\varphi\lor\psi>\varphi)\land\neg(\varphi\lor\psi>\psi)\to\mathrm{SDA},
\]
which is weaker than SDA. But it is not too weak, since as long as
we exclude the cases when the disjunctive antecedent conditionally
implies one of its disjuncts, which are exactly the counterexamples
for SDA we have found, SDA is obtained. I think this resolution of
the puzzle around SDA is better than previous ones, since we can dispense
with any special treatments of the conditionals with disjunctive antecedents.
It is a big surprise that Lewis himself did not discover this simple
solution, even though he had published a note \citep{Lewis1977} about
SDA some years after he proposed the axiom DAE in \citep{Lewis1971}.

%\bibliographystyle{authordate1}
%\bibliography{/Users/wxf/Nutstore/Documents/library}

\begin{thebibliography}{}

\bibitem[\protect\citename{Alonso-Ovalle, }2006]{Alonso-Ovalle2006}
Alonso-Ovalle, Luis. 2006.
\newblock {\em {Disjunction in Alternative Semantics}}.
\newblock Ph.D. thesis, University of Massachusetts Amherst.

\bibitem[\protect\citename{{Arl\'{o} Costa}, }2014]{ArloCosta2014}
{Arl\'{o} Costa}, Horacio. 2014.
\newblock {The logic of conditionals}.
\newblock {\em In:} {Edward N. Zalta} (ed), {\em The Stanford Encyclopedia of
  Philosophy (Summer 2014 Edition)}.

\bibitem[\protect\citename{Carlstrom \& Hill, }1978]{Carlstrom1978}
Carlstrom, Ian~F., \& Hill, Christopher~S. 1978.
\newblock {The Logic of Conditionals by Ernest W. Adams}.
\newblock {\em Philosophy of Science}, {\bf 45}(1), 155--158.

\bibitem[\protect\citename{Fine, }1975]{Fine1975b}
Fine, Kit. 1975.
\newblock {Critical Notice}.
\newblock {\em Mind}, {\bf 84}(335), 451--458.

\bibitem[\protect\citename{Gabbay, }1972]{Gabbay1972}
Gabbay, Dov~M. 1972.
\newblock {A general theory of the conditional in terms of a ternary operator}.
\newblock {\em Theoria}, {\bf 38}(3), 97--104.

\bibitem[\protect\citename{Klinedinst, }2007]{Klinedinst2007}
Klinedinst, Nathan~Winter. 2007.
\newblock {\em {Plurality and Possibility}}.
\newblock Ph.D. thesis, University of California at Los Angeles.

\bibitem[\protect\citename{Lewis, }1971]{Lewis1971}
Lewis, David. 1971.
\newblock {Completeness and decidability of three logics of counterfactual
  conditionals}.
\newblock {\em Theoria}, {\bf 37}(1), 74--85.

\bibitem[\protect\citename{Lewis, }1977]{Lewis1977}
Lewis, David. 1977.
\newblock {Possible-world semantics for counterfactual logics: A rejoinder}.
\newblock {\em Journal of Philosophical Logic}, {\bf 6}(1), 359--363.

\bibitem[\protect\citename{Lewis, }1973]{Lewis1973a}
Lewis, David~K. 1973.
\newblock {\em {Counterfactuals}}.
\newblock Harvard University Press.

\bibitem[\protect\citename{Loewer, }1976]{Loewer1976}
Loewer, Barry. 1976.
\newblock {Counterfactuals with disjunctive antecedents}.
\newblock {\em The Journal of Philosophy}, {\bf 73}(16), 531--537.

\bibitem[\protect\citename{McKay \& van Inwagen, }1977]{McKay1977}
McKay, Thomas, \& van Inwagen, Peter. 1977.
\newblock {Counterfactuals with disjunctive antecedents}.
\newblock {\em Philosophical Studies}, {\bf 31}, 353--356.

\bibitem[\protect\citename{Nute, }1975]{Nute1975}
Nute, Donald. 1975.
\newblock {Counterfactuals and the similarity of worlds}.
\newblock {\em Journal of Philosophy}, {\bf 72}(21), 773--778.

\bibitem[\protect\citename{Nute, }1978]{Nute1978}
Nute, Donald. 1978.
\newblock {Simplification and substitution of counterfactual antecedents}.
\newblock {\em Philosophia}, {\bf 2}, 317--325.

\bibitem[\protect\citename{Nute, }1980a]{Nute1980a}
Nute, Donald. 1980a.
\newblock {Conversational scorekeeping and conditionals}.
\newblock {\em Journal of Philosophical Logic}, {\bf 9}, 153--166.

\bibitem[\protect\citename{Nute, }1980b]{Nute1980}
Nute, Donald. 1980b.
\newblock {\em {Topics in Conditional Logic}}.
\newblock Reidel.

\bibitem[\protect\citename{Nute, }1984]{Nute1984}
Nute, Donald. 1984.
\newblock {Conditional logic}.
\newblock {\em Pages  387--439 of:} Guenthner, Franz, \& Gabbay, Dov (eds),
  {\em Handbook of Philosophical Logic, Vol.II}.
\newblock Dordrecht: Springer.

\bibitem[\protect\citename{Nute \& Cross, }2001]{Nute2001}
Nute, Donald, \& Cross, Charles~B. 2001.
\newblock {Conditional Logic}.
\newblock {\em Pages  1--98 of:} Gabbay, D. (ed), {\em Handbook of
  Philosophical Logic, Vol.4}.
\newblock Dordrecht: D. Reidel.

\bibitem[\protect\citename{Paoli, }2012]{Paoli2012}
Paoli, Francesco. 2012.
\newblock {A paraconsistent and substructural conditional Logic}.
\newblock {\em Pages  173--198 of:} Tanaka, Koji, Berto, Francesco, Mares,
  Edwin, \& Paoli, Francesco (eds), {\em Paraconsistency: Logic and
  Applications}.
\newblock Springer.

\bibitem[\protect\citename{Pozzato, }2010]{Pozzato2010}
Pozzato, Gian~Luca. 2010.
\newblock {\em {Conditional and Preferential Logics: Proof Methods and Theorem
  Proving}}.
\newblock IOS Press.

\end{thebibliography}

\end{document}